\documentclass[11pt, a4paper]{article}

\pdfoutput=1

\usepackage[utf8]{inputenc}
\usepackage[T1]{fontenc}
\usepackage[colorlinks]{hyperref}
\usepackage{microtype}

\hypersetup{
  linkcolor=[rgb]{0.3,0.3,0.6},
  citecolor=[rgb]{0.2, 0.6, 0.2},
  urlcolor=[rgb]{0.6, 0.2, 0.2}
}

\usepackage{mathtools, amsthm, amsfonts, amssymb}
\usepackage{booktabs}
\usepackage{braket}
\usepackage{commath}
\usepackage{caption}
\usepackage{titlesec}
\titlelabel{\thetitle. }
\titleformat*{\section}{\large\bfseries}

\usepackage[capitalise]{cleveref}

\newtheorem{theorem}{Theorem}
\newtheorem{proposition}[theorem]{Proposition}
\newtheorem{lemma}[theorem]{Lemma}

\theoremstyle{definition}
\newtheorem{definition}[theorem]{Definition}

\newtheorem*{problem*}{Problem}

\newtheorem{remark}[theorem]{Remark}

\DeclareMathOperator{\supprank}{R_s}
\DeclareMathOperator{\bsupprank}{\underline{R}_s}
\DeclareMathOperator{\rank}{R}
\DeclareMathOperator{\borderrank}{\underline{R}}

\newcommand{\CC}{\mathbb{C}}
\newcommand{\NN}{\mathbb{N}}
\newcommand{\Sym}{\mathrm{Sym}}
\newcommand{\GL}{\mathrm{GL}}

\newcommand{\GHZ}{\mathrm{GHZ}}

\newcommand{\id}{\mathbf{1}}

\newcommand{\ZZ}{\mathbb{Z}}

\DeclareMathOperator{\Span}{Span}

\DeclareMathOperator{\diag}{diag}

\newcommand{\FF}{\mathbb{F}}

\DeclareMathOperator{\dett}{det}

\newcommand{\Oh}{\mathcal{O}}

\newcommand{\pert}{q}

\begin{document}

\vspace*{0.7em}
\begin{center}
\Large\textbf{The border support rank of two-by-two\\[0.2em] matrix multiplication is seven}\par
\vspace{1em}
\large Markus Bläser, Matthias Christandl and Jeroen Zuiddam\par
\end{center}
\vspace{0.7em}

\begin{abstract}
We show that the border support rank of the tensor corresponding to two-by-two matrix multiplication is seven over the complex numbers. We do this by constructing two polynomials that vanish on all complex tensors with format four-by-four-by-four and border rank at most six, but that do not vanish simultaneously on any tensor with the same support as the two-by-two matrix multiplication tensor. This extends the work of Hauenstein, Ikenmeyer, and Landsberg. We also give two proofs that the support rank of the two-by-two matrix multiplication tensor is seven over any field: one proof using a result of De Groote saying that the decomposition of this tensor is unique up to sandwiching, and another proof via the substitution method. These results answer a question asked by Cohn and Umans. Studying the border support rank of the matrix multiplication tensor is relevant for the design of matrix multiplication algorithms, because upper bounds on the border support rank of the matrix multiplication tensor lead to upper bounds on the computational complexity of matrix multiplication, via a construction of Cohn and Umans. Moreover, support rank has applications in quantum communication complexity.
\end{abstract}


\section{Introduction}

%

Multiplication of two $n\times n$ matrices over a field $\FF$ is an $\FF$-bilinear map ${\FF^{n\times n} \times \FF^{n\times n} \to \FF^{n\times n}}$ called the matrix multiplication map. The matrix multiplication map corresponds naturally to the following structure tensor. Let $[n]$ be the set $\{1,2,\ldots, n\}$ and let $\{e_{ij} : i,j\in [n]\}$ be the standard basis for the vector space $\FF^{n\times n}$ of $n\times n$ matrices. Define the structure tensor of the matrix multiplication map as
\[
\langle n,n,n \rangle \coloneqq \sum_{\mathclap{i,j,k\in [n]}} e_{ij}\otimes e_{jk} \otimes e_{ki}\, \in\, \FF^{n\times n} \otimes \FF^{n\times n} \otimes \FF^{n\times n}.
\]
(Technically, this is the structure tensor of the trilinear map that computes the trace of a product of three matrices.)
Let $V_1$, $V_2$, and $V_3$ be vector spaces. The tensor rank of a tensor $t\in V_1\otimes V_2\otimes V_3$ is the smallest number $r$ such that~$t$ can be written as a sum of $r$ simple tensors $v_1\otimes v_2\otimes v_3 \in V_1 \otimes V_2 \otimes V_3$.
The computational complexity of matrix multiplication is tightly related to the tensor rank of the tensor $\langle n,n,n\rangle$ (see e.g.~\cite{burgisser1997algebraic}). Strassen showed that the tensor rank of $\langle 2,2,2 \rangle$ is at most seven over any field \cite{strassen1969gaussian}; Hopcroft and Kerr \cite{hopcroft1971minimizing} showed that the tensor rank is at least seven over the finite field $\FF_2$, and Winograd \cite{winograd1971multiplication} showed that the tensor rank is at least seven over any field. Over an algebraically closed field, the border rank of a tensor $t\in V_1\otimes V_2\otimes V_3$ is the smallest number $r$ such that~$t$ is in the Zariski closure of all tensors of rank at most $r$ in $V_1 \otimes V_2 \otimes V_3$.  Landsberg proved that the border rank of $\langle 2,2,2 \rangle$ is seven over the field $\CC$ of complex numbers \cite{landsberg2006border}, and a different proof for this based on highest-weight vectors was later given by Hauenstein, Ikenmeyer and Landsberg \cite{hauenstein2013equations}.

We extend the above results. Let $t\in V_1\otimes V_2 \otimes V_3$ be a tensor in a fixed basis, a hypermatrix. The \emph{support} of $t$ is the set of coordinates where $t$ has a nonzero coefficient. The \emph{support rank} of $t$ is the minimal rank of a tensor with the same support as $t$. This has also been called s-rank \cite{cohn2013fast}, nondeterministic rank \cite{de2003nondeterministic}, zero-one rank \cite{wigdersontalk} and minimum rank of a nonzero pattern \cite{berman2008minimum} in the literature. The \emph{border support rank} of $t$ is the minimal border rank of a tensor with the same support as $t$. We prove the following.

\begin{theorem}\label{subth}
The support rank of $\langle 2,2,2 \rangle$ is seven over any field $\FF$.
\end{theorem}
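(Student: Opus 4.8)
\emph{The plan.} The upper bound is immediate: Strassen's algorithm gives $\rank\langle 2,2,2\rangle\le 7$ over every field, and support rank never exceeds rank. For the lower bound I must show that every hypermatrix $t\in\FF^4\otimes\FF^4\otimes\FF^4$ with the same support as $\langle 2,2,2\rangle$ has $\rank t\ge 7$. Writing the basis of each factor as matrix units $e_{pq}$ with $p,q\in\{1,2\}$, the support of $\langle 2,2,2\rangle$ is the set of eight positions $((i,j),(j,k),(k,i))$, $i,j,k\in\{1,2\}$. The first step is a normalisation: tensor rank is invariant under the torus $(\FF^\times)^4\times(\FF^\times)^4\times(\FF^\times)^4$ rescaling the coordinates of each factor, and on the eight nonzero coefficients of $t$ this torus acts through the monomials $a^{(1)}_{ij}a^{(2)}_{jk}a^{(3)}_{ki}$; the only multiplicative relation among these eight monomials is the alternating one (the unique $2\times2\times2$ pattern all of whose line sums vanish). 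Hence the torus normalises seven of the eight coefficients to $1$, so (after extending scalars to $\overline{\FF}$, which can only decrease the rank) $t$ is equivalent to $t_\mu:=\langle 2,2,2\rangle+(\mu-1)\,e_{22}\otimes e_{22}\otimes e_{22}$ for some $\mu\in\overline{\FF}^\times$. From $\langle 2,2,2\rangle=t_\mu+(1-\mu)\,e_{22}\otimes e_{22}\otimes e_{22}$, subadditivity of rank and $\rank\langle 2,2,2\rangle=7$ (Strassen and Winograd) give $\rank t_\mu\ge 6$, and the whole task is to rule out $\rank t_\mu=6$ for $\mu\notin\{0,1\}$.

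\emph{First proof (via De Groote).} Suppose $\rank t_\mu=6$. Then $\langle 2,2,2\rangle=t_\mu+(1-\mu)\,e_{22}\otimes e_{22}\otimes e_{22}$ writes $\langle 2,2,2\rangle$ as a sum of seven rank-one tensors, hence is an optimal decomposition. By De Groote's theorem every optimal decomposition of $\langle 2,2,2\rangle$ lies in the orbit of Strassen's under the symmetry group of $\langle 2,2,2\rangle$ — sandwiching by $\GL_2\times\GL_2\times\GL_2$, together with cyclic permutation of the three factors and transposition — so the summand $(1-\mu)\,e_{22}\otimes e_{22}\otimes e_{22}$ is the image under a symmetry of one of Strassen's seven summands. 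Strassen's decomposition has exactly one summand whose three matrix factors all have rank two, and the symmetry group preserves the rank profile of a summand; since $e_{22}$ has rank one, our summand comes from one of the remaining six. Write such a summand, via the trace form, as a triple of rank-one matrices $L=u_Lv_L^{\mathsf T},\ M=u_Mv_M^{\mathsf T},\ N=u_Nv_N^{\mathsf T}$, and consider $\phi:=(v_L^{\mathsf T}u_N)(v_M^{\mathsf T}u_L)(v_N^{\mathsf T}u_M)$. A short computation shows that $\phi$ is invariant under the whole symmetry group and that it scales like the coefficient of the summand; on Strassen's six rank-$(1,1,1)$ summands $\phi$ takes only the values $0$ and $1$, while on $(1-\mu)\,e_{22}\otimes e_{22}\otimes e_{22}$ it equals $1-\mu$. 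Hence $1-\mu\in\{0,1\}$, i.e.\ $\mu\in\{0,1\}$, a contradiction; so $\rank t_\mu\ge 7$.

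\emph{Second proof (via the substitution method).} Here I work directly with an arbitrary full-support $t$ and apply the substitution lemma repeatedly: if the slice of the current tensor along a chosen coordinate of some factor is nonzero, then for a suitable value of the substitution parameter the rank drops by at least one. Since that value is not under my control, at each step I must maintain a rank lower bound valid for every value of the parameter; I therefore choose which coordinate of which factor to eliminate so that, whatever the substitution, the resulting tensor still contains enough of the support pattern of $\langle 2,2,2\rangle$ — exploiting its cyclic symmetry and the fact that the slices of $\langle 2,2,2\rangle$ in each factor are rank-two matrices supported on disjoint coordinate pairs. One checks that after six such steps one is left with a (say $2\times2\times2$) tensor that is still nonzero, hence of rank at least one; summing the drops gives $\rank t\ge 7$. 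Only the nonvanishing of coefficients is ever used, so the argument is insensitive to their values, exactly as a support-rank bound demands.

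\emph{Main obstacle.} The difficulty shared by both proofs is that the support of $\langle 2,2,2\rangle$ does not determine the tensor up to equivalence: after the torus normalisation one still faces a genuine one-parameter family $t_\mu$ (equivalently, one must run an argument that survives an adversarial choice of coefficients). In the De Groote proof the work lies in isolating the invariant $\phi$, verifying its invariance and scaling behaviour, computing its values on Strassen's summands, and making sure De Groote's classification may be invoked over an arbitrary field — which is why extending scalars to $\overline{\FF}$ at the outset is convenient. In the substitution proof the obstacle resurfaces as a finite but delicate case analysis over the support patterns that can survive an adversarially chosen substitution.
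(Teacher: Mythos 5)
Your overall strategy coincides with the paper's: normalize by the torus action to a one-parameter family (the paper's \cref{paramreduction} does this explicitly over any $\FF$ using only divisions, so no passage to $\overline{\FF}$ is needed), then kill the remaining parameter either via De Groote's uniqueness theorem or via the substitution method. Two points need attention. In the De Groote argument, your invariant $\phi=(v_L^{\mathsf T}u_N)(v_M^{\mathsf T}u_L)(v_N^{\mathsf T}u_M)$ is \emph{not} invariant under sandwiching: under $L\mapsto A^{-1}LB$, $M\mapsto B^{-1}MC$, $N\mapsto C^{-1}NA$ the factor $v_L^{\mathsf T}u_N$ becomes $v_L^{\mathsf T}BC^{-1}u_N$, which does not telescope. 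You have the cyclic order reversed; the correct invariant is $\mathrm{tr}(LMN)=(v_L^{\mathsf T}u_M)(v_M^{\mathsf T}u_N)(v_N^{\mathsf T}u_L)$. Once corrected the argument closes, and in fact more cleanly than by computing on Strassen's summands: the form of De Groote's theorem the paper cites says every rank-$(1,1,1)$ summand of an optimal decomposition equals $(A^{-1}e_{11}B)\otimes(B^{-1}e_{11}C)\otimes(C^{-1}e_{11}A)$, whose trace invariant is $\mathrm{tr}(e_{11})=1$, while your summand has trace invariant $1-\mu$; this forces $\mu=0$ directly. (The paper reaches the same conclusion by an eigenvector/eigenvalue computation rather than a named invariant.)

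Your substitution-method proof, by contrast, is only an outline: the assertion that ``after six such steps one is left with a nonzero tensor'' is exactly where the work lies, because the linear forms substituted at each step are adversarial and can wipe out parts of the remaining support. The paper's version first conjugates by an upper-triangular matrix to control where the perturbation term $(\pert-1)x_{22}y_{22}z_{22}$ migrates, then performs five substitutions (one in $Z$, two in $Y$, two in $X$) and exhibits an explicit rank-$2$ remainder $\sum_j x_{1j}y_{j1}z_{11}$, giving $5+2=7$; without that bookkeeping the second argument is not yet a proof. Since the (corrected) De Groote argument already establishes the theorem, the proposal as a whole is sound, but the second proof should either be completed along these lines or dropped.
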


\begin{theorem}\label{mainth}
The border support rank of $\langle 2,2,2 \rangle$ is seven over $\CC$.
\end{theorem}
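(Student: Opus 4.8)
The upper bound is immediate: the border rank of $\langle 2,2,2\rangle$ is seven \cite{landsberg2006border}, so its border support rank is at most seven. The content is in the lower bound, and here the plan is to show that no tensor with support equal to $\supp\langle 2,2,2\rangle$ has border rank at most six, i.e.\ lies in $\sigma_6$, the Zariski closure of the set of tensors of rank at most six in $\CC^{4}\otimes\CC^{4}\otimes\CC^{4}$. Following the highest-weight-vector method of Hauenstein, Ikenmeyer and Landsberg \cite{hauenstein2013equations}, I would exhibit two explicit polynomials $p_1,p_2$ in the vanishing ideal $I(\sigma_6)$ and prove that they do not vanish simultaneously on any tensor with full support on $\supp\langle 2,2,2\rangle$.

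To set this up, I would parametrize the generic tensor with this support by
\[
T(x)\;=\;\sum_{i,j,k\in[2]} x_{ijk}\, e_{ij}\otimes e_{jk}\otimes e_{ki},
\]
with the eight indeterminates $x_{ijk}$; one first checks that the eight triples $\big((i,j),(j,k),(k,i)\big)$ index distinct coordinates of $\CC^{4}\otimes\CC^{4}\otimes\CC^{4}$, so that $T(x)$ is the most general tensor supported on $\supp\langle 2,2,2\rangle$, and a tensor supported \emph{exactly} there is $T(x)$ for some $x$ with $\prod_{i,j,k}x_{ijk}\neq 0$. If such a $T(x)$ had border rank at most six, then $p_a(T(x))=0$ for $a=1,2$. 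So it suffices to produce $p_1,p_2\in I(\sigma_6)$ for which the system $p_1(T(x))=p_2(T(x))=0,\ \prod_{i,j,k}x_{ijk}\neq 0$ is infeasible over $\CC$ — equivalently, by the Nullstellensatz, for which $\prod_{i,j,k}x_{ijk}\in\sqrt{\big(p_1(T(x)),\,p_2(T(x))\big)}$ in $\CC[x]$.

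To find $p_1,p_2$ I would use the $\GL_4\times\GL_4\times\GL_4$-action on $I(\sigma_6)$: the ideal decomposes into isotypic components, each irreducible summand is generated by its highest-weight vectors, and these can be written down explicitly from the Young-flattening (Koszul-type) equations for $\sigma_6$ in the $4\times4\times4$ format used in \cite{hauenstein2013equations}. I would then search among these highest-weight vectors (and the weight vectors obtained from them) for a suitable pair. The search is guided by the extra symmetry of the problem: the torus $(\CC^{*})^{4}\times(\CC^{*})^{4}\times(\CC^{*})^{4}$, the cyclic $\ZZ/3\ZZ$ symmetry and the $\Sym_2$ symmetry of $\langle 2,2,2\rangle$ all act on the family $\{T(x)\}$, which can be used both to reduce the eight parameters to a handful of essential ones and to keep the polynomials $p_a(T(x))$ small enough to manipulate.

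The main obstacle is precisely this last infeasibility step, and it is what forces the use of two polynomials rather than one: the zero locus of a single equation from $I(\sigma_6)$, restricted to the eight-dimensional family $\{T(x)\}$, is a hypersurface that generically meets the open set $\{\prod x_{ijk}\neq 0\}$, so it cannot rule out all tensors with the prescribed support — this is exactly the gap between the border support rank statement and the border rank statement of \cite{hauenstein2013equations}. One must instead select two polynomials whose common zero locus in the family is forced into the union of the coordinate hyperplanes, and then certify this, either by a symmetry-reduced hand computation or by an explicit Gröbner-basis or Nullstellensatz certificate. Producing such a pair and verifying the certificate is the technical heart of the proof.
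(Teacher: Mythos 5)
Your outline matches the paper's strategy at the top level --- two explicit elements of $I(\sigma_6)$, found as highest-weight vectors for the $\GL_4^{\times 3}$-action, that have no common zero among tensors supported exactly on $\supp\langle 2,2,2\rangle$ --- but the step you defer as ``the technical heart'' is exactly where your plan is incomplete, and the paper closes it with an idea you only gesture at. You propose to certify infeasibility of $p_1(T(x))=p_2(T(x))=0$, $\prod x_{ijk}\neq 0$ over the full eight-parameter family, by Gr\"obner bases or a Nullstellensatz certificate; that computation is not carried out and is not obviously feasible for degree-$19$ and degree-$20$ invariants in eight variables. The paper instead first proves \cref{paramreduction}: by scaling slices, rows and columns (i.e.\ acting by triples of diagonal matrices, under which border rank is invariant), \emph{every} tensor with this support is equivalent to $\langle 2,2,2\rangle_\pert$ of \cref{perturbed}, which differs from $\langle 2,2,2\rangle$ only in the single coefficient $\pert\neq 0$. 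You note that the torus can be used to ``reduce the eight parameters to a handful,'' but the point is that it reduces them to \emph{one}, and this must be proved up front, because it turns the final step into evaluating the two polynomials at a one-parameter family: the paper gets $f_{19}(\langle2,2,2\rangle_\pert)=\alpha\,\pert(3\pert+2)$ and $f_{20}(\langle2,2,2\rangle_\pert)=\beta\,\pert^2(\pert+1)$, univariate polynomials whose only common root is $\pert=0$. No elimination-theoretic certificate is needed. Incidentally, the one-parameter reduction also shows your heuristic for why one polynomial cannot suffice is not airtight: a single $f$ with $f(\langle2,2,2\rangle_\pert)=c\,\pert^k$ would do; two are needed here only because each of the polynomials actually found has a spurious root ($\pert=-2/3$, resp.\ $\pert=-1$).

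A secondary inaccuracy: the equations for $\sigma_6\subset\otimes^3\CC^4$ in degrees $19$ and $20$ are not obtained from Young/Koszul flattenings (those methods only give the weaker bound $2n^2-n=6$ mentioned in the introduction). Both in \cite{hauenstein2013equations} and here they are produced by a randomized search inside the highest-weight-vector spaces of the specific partition triples $(5,5,5,4)^{3}$ and $(5,5,5,5)^{3}$ --- spanning sets indexed by permutation pairs, numerical evaluation at random rank-$6$ tensors to find the kernel --- followed by symbolic verification via straightening. Your upper bound (via $\borderrank(\langle2,2,2\rangle)\le 7$) is fine and marginally more direct than the paper's appeal to \cref{subth}.
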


\cref{subth} and \cref{mainth} answer a question of Cohn and~Umans~\cite{cohn2013fast}, that was also posed as an open problem during the Algorithms and Complexity in Algebraic Geometry programme at the Simons Institute \cite{simons}. We note that, in general, computing the tensor rank or support rank of a tensor is a computationally hard task. Namely, given a 3-tensor $t$ and a natural number~$r$, deciding whether the tensor rank of $t$ is at most $r$ is NP-complete over any finite field \cite{HASTAD1990644} and NP-hard over any integral domain \cite{shitov2016hard}.
Moreover, given a 2-tensor (that is, a matrix) $A$ and a natural number~$r$, deciding whether the support rank of $A$ is at most $r$ is NP-hard over the real numbers \cite{bhangale2015complexity}.

Previously, it was known that the border support rank of the matrix multiplication tensor $\langle n,n,n \rangle$ is at least $2n^2-n$ \cite{buhrman2016nondeterministic}, so in particular that the border support rank of $\langle 2,2,2 \rangle$ is at least six. This result was obtained using Young flattenings.

Studying (border) support rank is interesting for two reasons. The first reason comes from algebraic complexity theory. As mentioned above, the tensor rank of the matrix multiplication tensor is tightly related to the computational complexity of matrix multiplication. It turns out that asymptotically, the border support rank of matrix multiplication gives an upper bound on the tensor rank of matrix multiplication, as follows.
%
%
The exponent of matrix multiplication~$\omega$ is defined as the smallest number $\beta$ such that for any $\varepsilon>0$ the tensor rank of $\langle n,n,n \rangle$ is in $\Oh(n^{\beta+\varepsilon})$. The number $\omega$ is between 2 and 2.3728639 \cite{le2014powers} and it is a major open problem in algebraic complexity theory to decide whether $\omega$ equals 2. One can define an analogous quantity $\omega_s$ for the support rank of $\langle n,n,n \rangle$. One can show with Strassen's laser method that $\omega \leq (3\omega_s - 2)/2$ \cite{cohn2013fast}. To show that $\omega = 2$, it therefore suffices to show that $\omega_s = 2$. Cohn and Umans aim to obtain upper bounds on $\omega_s$ by realizing the algebra of $n\times n$ matrices inside some cleverly chosen group algebra.

The second reason, which was our original motivation, comes from quantum communication complexity. Let $f: X\times Y \times Z \to \{0,1\}$ be a function on a product of finite sets $X$, $Y$ and $Z$. Alice, Bob and Charlie have to compute $f$ in the following sense. Alice receives an $x\in X$, Bob receives a $y\in Y$ and Charlie receives a $z\in Z$. Moreover, the players share a so-called Greenberger-Horne-Zeilinger (GHZ) state of rank $r$, which is described by the tensor $\GHZ_r = \sum_{i=1}^r e_i \otimes e_i \otimes e_i \in (\CC^r)^{\otimes 3}$.
The players apply local quantum operations. After this, each player has to output a bit such that if $f(x,y,z) = 1$, then with some nonzero probability all players output 1 and if $f(x,y,z) = 0$, then with probability zero all players output 1. The complexity of such a protocol is the logarithm of the rank $r$ of the GHZ-state used, and the minimum complexity of all quantum protocols for $f$ is the \emph{nondeterministic communication complexity} of~$f$. This number equals the logarithm of the support rank of the tensor with support given by~$f$, that is $\sum_{x,y,z} f(x,y,z)\, e_x\otimes e_y \otimes e_z$ \cite{buhrman2016nondeterministic}. Similarly, the logarithm of the border support rank of the tensor with support given by $f$ equals the \emph{approximate} nondeterministic communication complexity of $f$.
%
Since tensor rank and border rank are natural measures of entanglement, our result may also be of interest to the quantum information theory community. 

\paragraph{Notation.} For any tensor $t$, we will denote tensor rank by $\rank(t)$, border rank by $\borderrank(t)$, support rank by $\supprank(t)$ and border support rank by $\bsupprank(t)$.





\paragraph{Paper outline.}
This paper is structured as follows. In \cref{sec:supprank} we give two proofs for \cref{subth}. In \cref{sec:bsupprank} we give a short introduction to border rank lower bounds by highest-weight vectors and then apply this theory to prove \cref{mainth}.

\section{Support rank}\label{sec:supprank}

We will give two proofs for \cref{subth}. Both proofs use the following lemma that reduces the 8-parameter minimization problem at hand to a 1-parameter minimization problem. Let~$\FF$ be a field.  Let $e_{11} = \bigl(\begin{smallmatrix}1&0\\0&0\end{smallmatrix}\bigr)$, $e_{12}=\bigl(\begin{smallmatrix}0&1\\0&0\end{smallmatrix}\bigr)$, $e_{21}=\bigl(\begin{smallmatrix}0&0\\1&0\end{smallmatrix}\bigr)$, $e_{22}=\bigl(\begin{smallmatrix}0&0\\0&1\end{smallmatrix}\bigr)$ be the standard basis of the space of $2\times 2$ matrices~$\FF^{2\times 2}$ over $\FF$. Let $e_{1},e_{2},e_{3},e_{4}$ be the standard basis of $\FF^4$. We naturally identify $\FF^{2\times 2}$ with $\FF^4$ by $e_{11} \mapsto e_1$, $e_{12} \mapsto e_2$, $e_{21} \mapsto e_3$, $e_{22} \mapsto e_4$. Let $\GL_4(\FF)^{\times 3}$ act on the tensor space $\FF^{2\times 2}\otimes \FF^{2\times 2}\otimes\FF^{2\times 2}$ accordingly.

\begin{lemma}[Parameter reduction]\label{paramreduction}
Let $t\in \FF^{2\times2} \otimes \FF^{2\times2} \otimes \FF^{2\times2}$ be a tensor with the same support as the matrix multiplication tensor $\langle 2,2,2\rangle$. There is a tensor $s$ in the $\GL_4(\FF)^{\times 3}$-orbit of $t$, with the same support as~$t$, such that all nonzero entries of $s$ are 1 except possibly for the coefficient of~$e_{11}\otimes e_{11} \otimes e_{11}$.
\end{lemma}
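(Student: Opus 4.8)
The plan is to produce $s$ inside the orbit of $t$ under the diagonal torus $(\FF^{\times})^{4}\times(\FF^{\times})^{4}\times(\FF^{\times})^{4}\subseteq\GL_4(\FF)^{\times 3}$, which automatically preserves the support since a diagonal matrix with nonzero entries rescales each basis vector $e_p$ by a nonzero factor. Under the identification $e_{11}=e_1$, $e_{12}=e_2$, $e_{21}=e_3$, $e_{22}=e_4$, the support $S\subseteq[4]^{3}$ of $\langle 2,2,2\rangle$ consists of the eight triples
\[
(1,1,1),\ (1,2,3),\ (2,3,1),\ (2,4,3),\ (3,1,2),\ (3,2,4),\ (4,3,2),\ (4,4,4),
\]
so we may write $t=\sum_{(p,q,r)\in S}t_{pqr}\,e_p\otimes e_q\otimes e_r$ with every $t_{pqr}\neq 0$. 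Acting by $\diag(a_1,\dots,a_4)\otimes\diag(b_1,\dots,b_4)\otimes\diag(c_1,\dots,c_4)$ replaces $t_{pqr}$ by $a_pb_qc_r\,t_{pqr}$, so it suffices to find nonzero scalars $a_i,b_j,c_k$ with $a_pb_qc_r\,t_{pqr}=1$ for the seven triples $(p,q,r)\in S\setminus\{(1,1,1)\}$.

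The key step is to solve this system of seven monomial equations using only multiplications and divisions of the given coefficients --- never an extraction of roots --- which is what makes the argument work over an arbitrary field. Concretely, first set $a_2=a_3=a_4=1$. Then six of the seven equations --- all except the one for $(1,2,3)$ --- involve only the $b_j$ and $c_k$, namely $b_3c_1=t_{231}^{-1}$, $b_3c_2=t_{432}^{-1}$, $b_1c_2=t_{312}^{-1}$, $b_4c_3=t_{243}^{-1}$, $b_4c_4=t_{444}^{-1}$, $b_2c_4=t_{324}^{-1}$. Viewing these six equations as edges of a bipartite graph on $\{b_1,\dots,b_4\}\sqcup\{c_1,\dots,c_4\}$, one checks directly that this graph is a forest: it is the disjoint union of a tree on $\{c_1,b_3,c_2,b_1\}$ and a tree on $\{c_3,b_4,c_4,b_2\}$. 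Hence, choosing a root in each component (say $b_3=b_4=1$) and propagating along the tree edges, every $b_j$ and $c_k$ is forced to a well-defined nonzero value; the leftover equation $a_1b_2c_3=t_{123}^{-1}$ then forces $a_1$ to a nonzero value. The tensor obtained from these diagonal scalings is the desired $s$.

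I would also record why the exceptional entry cannot be eliminated: each of $1,2,3,4$ occurs exactly twice among the first coordinates of $S$, and likewise among the second and the third, so $\prod_{(p,q,r)\in S}a_pb_qc_r=\big(\prod_i a_i\big)^{2}\big(\prod_j b_j\big)^{2}\big(\prod_k c_k\big)^{2}$ is a perfect square. Consequently the product of all eight transformed coefficients equals a square times $\prod_{(p,q,r)\in S}t_{pqr}$, so if $t$ is chosen with $\prod_{(p,q,r)\in S}t_{pqr}$ a non-square in $\FF^{\times}$ (possible, e.g., over $\FF=\mathbb{Q}$), no diagonal rescaling can make all eight coefficients equal to $1$. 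The main --- and fairly modest --- obstacle is therefore organisational: transcribing the matrix-index support of $\langle 2,2,2\rangle$ into $[4]^{3}$ without error, and verifying that the six-equation subsystem really has an acyclic incidence graph, since acyclicity is exactly what allows the root-free, division-only solution above.
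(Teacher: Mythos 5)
Your proof is correct and follows essentially the same route as the paper: both restrict to the diagonal torus in $\GL_4(\FF)^{\times 3}$ (which manifestly preserves the support) and normalize the seven coefficients outside $e_{11}\otimes e_{11}\otimes e_{11}$, the paper by a greedy three-stage slice/row/column rescaling and you by solving the monomial system along a spanning forest. Your closing observation that the index multiplicities force the product of all eight coefficients to change only by a square, so that the torus cannot in general normalize all eight, is a correct aside not present in the paper.
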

\begin{proof}
Identify the tensor $\langle 2,2,2\rangle = \sum_{i,j,k\in [2]} e_{ij}\otimes e_{jk} \otimes e_{k\ell}$ with the tensor
\[
e_{111} + e_{123} + e_{231} + e_{243} + e_{312} + e_{324} + e_{432} + e_{444} \in \FF^4 \otimes \FF^4 \otimes \FF^4,
\]
where $e_{ijk} = e_i \otimes e_j \otimes e_k$. We can view this tensor as as a $4\times 4\times 4$ cube filled with elements $0$ and $1$ from $\FF$.
Let $t$ be a tensor in $\FF^4 \otimes \FF^4 \otimes \FF^4$ with  the same support as $\langle 2,2,2 \rangle$, so, in 1-slices,
\begin{align*}
t &= \quad
\begin{bmatrix} 
a & 0 & 0 & 0\\
0 & 0 & b & 0\\
0 & 0 & 0 & 0\\
0 & 0 & 0 & 0
\end{bmatrix}
\quad
\begin{bmatrix} 
0 & 0 & 0 & 0\\
0 & 0 & 0 & 0\\
c & 0 & 0 & 0\\
0 & 0 & d & 0
\end{bmatrix}
\quad
\begin{bmatrix} 
0 & e & 0 & 0\\
0 & 0 & 0 & f\\
0 & 0 & 0 & 0\\
0 & 0 & 0 & 0
\end{bmatrix}
\quad
\begin{bmatrix} 
0 & 0 & 0 & 0\\
0 & 0 & 0 & 0\\
0 & g & 0 & 0\\
0 & 0 & 0 & h
\end{bmatrix}\\
\intertext{where $a,b,c,d,e,f,g,h$ are nonzero elements in $\FF$. Here we index the 1-slices by the first tensor leg, the rows of the slices by the second tensor leg and columns of the slices by the third tensor leg. Scaling the 1-slices of $t$ according to $\diag(1/b, 1/d, 1/f, 1/h)$, that is, applying $\diag(1/b, 1/d, 1/f, 1/h) \otimes \id_4\otimes \id_4$ to $t$, yields a tensor of the form}
t' &= \quad
\begin{bmatrix} 
a' & 0 & 0 & 0\\
0 & 0 & 1 & 0\\
0 & 0 & 0 & 0\\
0 & 0 & 0 & 0
\end{bmatrix}
\quad
\begin{bmatrix} 
0 & 0 & 0 & 0\\
0 & 0 & 0 & 0\\
c'& 0 & 0 & 0\\
0 & 0 & 1 & 0
\end{bmatrix}
\quad
\begin{bmatrix} 
0 & e'& 0 & 0\\
0 & 0 & 0 & 1\\
0 & 0 & 0 & 0\\
0 & 0 & 0 & 0
\end{bmatrix}
\quad
\begin{bmatrix} 
0 & 0 & 0 & 0\\
0 & 0 & 0 & 0\\
0 & g' & 0 & 0\\
0 & 0 & 0 & 1
\end{bmatrix}\\
\intertext{Scaling the rows of $t'$ according to $\diag(1/e', 1, 1/g', 1)$, that is, applying $\id_4 \otimes \diag(1/e', 1, 1/g', 1) \otimes \id_4$ to $t'$, yields a tensor of the form}
t'' &= \quad
\begin{bmatrix} 
a'' & 0 & 0 & 0\\
0 & 0 & 1 & 0\\
0 & 0 & 0 & 0\\
0 & 0 & 0 & 0
\end{bmatrix}
\quad
\begin{bmatrix} 
0 & 0 & 0 & 0\\
0 & 0 & 0 & 0\\
c''& 0 & 0 & 0\\
0 & 0 & 1 & 0
\end{bmatrix}
\quad
\begin{bmatrix} 
0 & 1 & 0 & 0\\
0 & 0 & 0 & 1\\
0 & 0 & 0 & 0\\
0 & 0 & 0 & 0
\end{bmatrix}
\quad
\begin{bmatrix} 
0 & 0 & 0 & 0\\
0 & 0 & 0 & 0\\
0 & 1 & 0 & 0\\
0 & 0 & 0 & 1
\end{bmatrix}
\end{align*}
Finally, scaling the columns of $t''$ according to $\diag(1/c'',1,1,1)$, that is, applying $\id_4 \otimes \id_4 \otimes \diag(1/c'',1,1,1)$ to $t''$, yields a tensor of the required form.
\end{proof}

Our first proof of \cref{subth} uses a corollary of a result of De Groote on the uniqueness of the decomposition of $\langle 2,2,2\rangle$ into simple tensors. 
\begin{theorem}[{\cite[Remark 4.2]{de1978varieties}}]\label{sandwich}
Let $v\coloneqq v_1 \otimes v_2 \otimes v_3\in \FF^{2\times 2} \otimes \FF^{2\times 2} \otimes \FF^{2\times 2}$ be an element of an arbitrary optimal decomposition of $\langle 2,2,2 \rangle$ into simple tensors over $\FF$ such that the rank of each $v_i$ as an element of $\FF^2 \otimes \FF^2$ is one. Then there exist invertible matrices $A,B,C\in \GL_2(\FF)$ such that $v = (A^{-1}e_{11}B) \otimes (B^{-1}e_{11}C) \otimes (C^{-1}e_{11}A)$, where  $A$, $B$ and $C$ act by matrix multiplication from the left and right on $\FF^{2\times 2}$.
\end{theorem}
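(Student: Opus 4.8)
The cleanest route is to deduce this from De Groote's full classification of the optimal decompositions of $\langle 2,2,2\rangle$, so I would set up that classification as the main input and extract the term-by-term statement from it.

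\emph{Step 1 (the elementary half).} Since $\langle 2,2,2\rangle$ is $1$-generic --- its contraction with an invertible matrix in any leg is an isomorphism $\FF^{2\times 2}\to\FF^{2\times 2}$ --- it is concise, so in any rank-$7$ decomposition $\langle 2,2,2\rangle=\sum_{l=1}^{7}v_1^{(l)}\otimes v_2^{(l)}\otimes v_3^{(l)}$ each of the three families $\{v_i^{(l)}\}_l$ spans $\FF^{2\times 2}$; in particular no factor is zero. For a single term $v=v_1\otimes v_2\otimes v_3$ with each $v_i=p_iq_i^{\top}$ a nonzero rank-one matrix, write things along the cyclic pattern $e_{ij}\otimes e_{jk}\otimes e_{ki}$ of $\langle 2,2,2\rangle$, so that $q_1,p_2$ lie in the ``$j$-slot'', $q_2,p_3$ in the ``$k$-slot'', and $q_3,p_1$ in the ``$i$-slot''. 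An elementary $2\times 2$ computation then shows that $v$ has the form $(A^{-1}e_{11}B)\otimes(B^{-1}e_{11}C)\otimes(C^{-1}e_{11}A)$ with $A,B,C\in\GL_2(\FF)$ \emph{if and only if} the three cyclic pairings $q_1^{\top}p_2,\ q_2^{\top}p_3,\ q_3^{\top}p_1$ are all nonzero: the desired form prescribes $p_1$ as the first column of $A^{-1}$, $q_1^{\top}$ as the first row of $B$, $p_2$ as the first column of $B^{-1}$, and so on cyclically, and requiring $p_2$ to be the first column of $B^{-1}$ while $q_1^{\top}$ is the first row of $B$ pins down the remaining row of $B$ up to a scalar, the resulting $B$ being invertible exactly when $q_1^{\top}p_2\neq0$ (symmetrically for $A,C$); the leftover scalings of the $p_i,q_i$ absorb the scalar ambiguities. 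In particular $e_{11}\otimes e_{11}\otimes e_{11}$ has all three pairings equal to $1$, and every term in its sandwiching orbit inherits nonzero pairings, since sandwiching rescales each pairing by a nonzero constant.

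\emph{Step 2 (the substantive half).} By De Groote's theorem, every optimal decomposition of $\langle 2,2,2\rangle$ lies in a single orbit under the group generated by the sandwiching action $a\otimes b\otimes c\mapsto (ga(g')^{-1})\otimes(g'b(g'')^{-1})\otimes(g''c g^{-1})$ of $\GL_2(\FF)^{\times 3}$ (which fixes $\langle 2,2,2\rangle$, the trace of a triple product being sandwiching-invariant) together with the cyclic and transpose symmetries of $\langle 2,2,2\rangle$ --- namely, the orbit of Strassen's algorithm. Hence a term $v$ as in the statement is a symmetry-image of a term of Strassen's algorithm; and since the symmetries all preserve the matrix rank of each leg, $v$ comes in fact from one of the six products of Strassen's algorithm whose $a$-, $b$- and $c$-forms are simultaneously rank-one matrices (the remaining product, $(a_{11}+a_{22})(b_{11}+b_{22})$, has a rank-two $a$- and $b$-form). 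A direct computation shows that each of those six products has all three cyclic pairings nonzero, hence by Step~1 is of the sandwiched form $(A^{-1}e_{11}B)\otimes(B^{-1}e_{11}C)\otimes(C^{-1}e_{11}A)$. Finally the cyclic and transpose symmetries carry this form to itself --- the transpose of $A^{-1}e_{11}B$ is $(B^{\top})^{-1}e_{11}(A^{\top})^{-1}$ since $e_{11}^{\top}=e_{11}$, and a cyclic relabelling merely permutes $A,B,C$ --- and sandwiching composes with it, so $v$ itself has the required form.

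The main obstacle is entirely in Step~2: De Groote's orbit statement is precisely the rigidity that makes the theorem true, and proving it from scratch takes real work --- either De Groote's original, elementary but intricate case analysis, tracking the seven projective points $[v_i^{(l)}]$ on the rank-$\leq 1$ quadric $\mathbb{P}^1\times\mathbb{P}^1\subset\mathbb{P}(\FF^{2\times 2})=\mathbb{P}^3$ and using the trilinear identity together with the span-and-relations structure forced by conciseness, or, over an algebraically closed field, a dimension count showing that the variety of optimal decompositions has the dimension of the symmetry group and is irreducible. Conciseness and trace identities alone are not enough: a naive restriction killing a term with, say, $q_2^{\top}p_3=0$ only produces a rank-$\leq 6$ decomposition of $\langle 2,2,1\rangle$, whose rank is $4$ anyway, so nothing is contradicted. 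Step~1, by contrast, is a routine finite check. (For the present paper it suffices to cite De Groote's classification and combine it with the characterisation of the sandwiched form by the three cyclic pairings.)
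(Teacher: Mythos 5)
First, a structural point: the paper does not prove this statement at all --- it is imported verbatim as De Groote's Remark~4.2 and used as a black box. Your proposal is therefore not in competition with an in-paper argument; it is an attempt to derive the cited statement from De Groote's stronger classification (transitivity of the sandwiching/symmetry group on optimal decompositions), which you also cite rather than prove. That architecture is legitimate and honestly flagged, and your identification of the six rank-one terms of Strassen's algorithm versus the one rank-two term is correct.

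The genuine gap is in your Step~1 ``if and only if''. Writing $v_i=p_iq_i^{\top}$ and $\pi_1=q_1^{\top}p_2$, $\pi_2=q_2^{\top}p_3$, $\pi_3=q_3^{\top}p_1$, the sandwiched form forces not only $\pi_i\neq 0$ but the exact normalization $\pi_1\pi_2\pi_3=\operatorname{tr}(v_1v_2v_3)=1$: indeed $\operatorname{tr}\bigl((A^{-1}e_{11}B)(B^{-1}e_{11}C)(C^{-1}e_{11}A)\bigr)=\operatorname{tr}(A^{-1}e_{11}A)=1$. This condition cannot be ``absorbed by the leftover scalings of the $p_i,q_i$'' as you claim: rescaling $p_i\mapsto\lambda_ip_i$, $q_i\mapsto\lambda_i^{-1}q_i$ sends $\pi_1\pi_2\pi_3$ to itself, so the product is an invariant of $(v_1,v_2,v_3)$, and chasing the six scalar constraints ($\alpha\beta=\gamma\delta=\epsilon\zeta=1$ from matching the $v_i$, and $\beta\gamma\pi_1=\delta\epsilon\pi_2=\zeta\alpha\pi_3=1$ from $(BB^{-1})_{11}=1$ etc.) shows they are solvable precisely when $\pi_1\pi_2\pi_3=1$. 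So your Step~1 criterion is false as stated (e.g.\ $2\,e_{11}\otimes e_{11}\otimes e_{11}$ has all pairings nonzero but is not sandwiched), and Step~2 then verifies the wrong condition for the six Strassen terms. This is not a cosmetic slip here: the scalar rigidity $\operatorname{tr}(v_1v_2v_3)=1$ is exactly what the paper extracts from \cref{sandwich} in its uniqueness-argument proof of \cref{subth}, where the telescoping product of eigenvalues forces $1-\pert=1$; with only ``trace nonzero'' that application collapses. The repair is easy --- replace ``all pairings nonzero'' by ``$\operatorname{tr}(v_1v_2v_3)=1$'' in Step~1 and check that each of the six rank-one Strassen terms indeed has $\operatorname{tr}(u_lv_lw_l)=1$ (they do) --- but as written the argument is broken.
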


\begin{definition}\label{perturbed}
For any number $q \in \FF$, define the perturbed matrix multiplication tensor $\langle 2,2,2 \rangle_\pert \coloneqq \langle 2,2,2\rangle + (\pert-1)\, e_{11}\otimes e_{11}\otimes e_{11}$; this is the tensor obtained from $\langle 2,2,2\rangle$ by replacing the coefficient of $e_{11}\otimes e_{11} \otimes e_{11}$ by~$\pert$. 
\end{definition}
We now give our first proof of \cref{subth} using the above uniqueness statement.

\begin{proof}[\bfseries\upshape Proof of \cref{subth}; uniqueness argument] As was already observed by De Groote, \cref{sandwich} 
gives the upper bound $\rank(\langle 2,2,2 \rangle_0) \leq 6$ and thus $\rank(\langle 2,2,2\rangle_\pert) \leq 7$ for all $\pert\in \FF$.  
We claim that $\rank(\langle 2,2,2\rangle_\pert) \geq 7$ for all nonzero $\pert\in \FF$.
Suppose $\pert$ is a number in $\FF$ such that $\rank(\langle 2,2,2\rangle_\pert) = 6$. Let $\langle 2,2,2\rangle_\pert = \sum_{i=1}^6 u_i \otimes v_i \otimes w_i$ be a decomposition into simple tensors. Then
\[
\langle 2,2,2 \rangle = \langle 2,2,2\rangle_\pert + (1-\pert)\, e_{11}\otimes e_{11} \otimes e_{11} = \sum_{i=1}^6 u_i \otimes  v_i \otimes w_i + (1-\pert)\, e_{11}\otimes e_{11} \otimes e_{11}
\]
is an optimal decomposition of $\langle 2,2,2\rangle$ into simple tensors. Therefore, by \cref{sandwich}, 
there exist $A,B,C$ in~$\GL_2(\FF)$ such that
\[
A^{-1} e_{11} B  \otimes  B^{-1} e_{11} C  \otimes  C^{-1} e_{11} A = (1-\pert)\, e_{11}\otimes e_{11} \otimes e_{11}.
\] 
Let $f_1, f_2$ be the standard basis of $\FF^2$. Then by taking appropriate transposes the previous equation is equivalent to
\[
A^{-1} f_1 \otimes B^T f_1  \otimes  B^{-1} f_1 \otimes C^T f_1  \otimes C^{-1} f_1 \otimes  A^T f_1  = (1-\pert)\, f_1^{\otimes 6}, 
\]
which implies that $A^T, B^T, C^T$ each have eigenvector $f_1$. Let $\alpha, \beta, \gamma$ be the respective eigenvalues. Then $A^{-1}, B^{-1}, C^{-1}$ have eigenvalues $\alpha^{-1}, \beta^{-1}, \gamma^{-1}$. This yields the equation $\alpha^{-1}\,\beta\,\beta^{-1}\,\gamma\,\alpha\,\gamma^{-1} = 1-\pert$.
We conclude that $\pert = 0$.
By \cref{paramreduction} we can conclude that $\supprank(\langle 2,2,2 \rangle) = 7$.
\end{proof}

Our second proof of \cref{subth} uses a method called the substitution method. Let $x_{ij}$, $y_{ij}$, $z_{ij}$ ($i,j\in [2]$) be variables. Let $X,Y,Z$ be the corresponding $2\times 2$ variable matrices. For $\pert \in \FF$, define the function
\[
f_\pert(X,Y,Z) \coloneqq \sum_{i,j,k\in[2]} x_{ij}\,y_{jk}\,z_{ki} + (\pert-1)\, x_{22}\,y_{22}\,z_{22}.
\]
The tensor rank of $\langle 2,2,2 \rangle_\pert$ is equal to the smallest number $r$ such that $f_\pert(X,Y,Z)$ can be written as a sum $\sum_{\rho=1}^r u_\rho(X) v_\rho(Y) w_\rho(Z)$, where $u_\rho$ is a linear form in the $x_{ij}$, similarly for $v_\rho$ and $w_\rho$.

\begin{proof}[\bfseries\upshape Proof of \cref{subth}; substitution method]
Suppose that the function $f_\pert(X,Y,Z)$ has rank~$r$ in the sense that it has a decomposition into a sum of~$r$ products of three linear forms as described above.
If $U = (u_{ij})_{ij\in[2]}$ is any upper triangular matrix, then $f_\pert(X,YU^{-1},UZ)$ has rank at most  $r$ and by direct computation
\[
f_\pert(X,YU^{-1},UZ) = f_\pert(X,Y,Z) + \frac{u_{12}}{u_{11}} (\pert-1)\, x_{22}\, y_{21}\, z_{22}.
\]
There exists an upper triangular matrix~$U$ such that the function $g_\pert(X,Y,Z) \coloneqq f_\pert(X,YU^{-1}, UZ)$ has a decomposition
\begin{equation}\label{eq:decomp}
g_\pert(X,Y,Z) = \sum_{\rho=1}^r u_\rho(X) v_\rho(Y) w_\rho(Z)
\end{equation}
in which $w_r(Z)$ is of the form $z_{21} + a_{12} z_{12} + a_{22} z_{22}$ for some $a_{12},a_{22} \in \FF$. 
Apply the substitution $z_{21} \mapsto \tilde w(Z) \coloneqq -  a_{12} z_{12} - a_{22} z_{22}$ to \eqref{eq:decomp} to see that 
\begin{multline}\label{eq:decomp1}
\sum_{j\in [2]} \bigl(x_{1j}\, y_{j1}\, z_{11} + x_{2j}\, y_{j1}\, z_{12} + x_{1j}\, y_{j2}\, \tilde{w}(Z)  \bigr)\\ + x_{21} \,y_{12}\, z_{22} + \pert\, x_{22} \,y_{22}\, z_{22} + \frac{u_{12}}{u_{11}} (\pert-1)\, x_{22}\, y_{21}\, z_{22}\\
= \sum_{\rho=1}^{r-1} u_\rho(X) v_\rho(Y) w_\rho\Bigl(\begin{bsmallmatrix}z_{11} & z_{12}\\ \tilde{w}(Z) & z_{22}\end{bsmallmatrix} \Bigr).
\end{multline}
We can test that $y_{22}$ occurs in the obtained decomposition of \eqref{eq:decomp1} by setting $x_{22}, z_{22}$ to $1$ and $y_{21}$ to $0$ and the other $x_{ij}, z_{ij}$ to $0$. 
We can test that $y_{12}$ occurs in the obtained decomposition of \eqref{eq:decomp1} by setting $x_{21}, z_{22}$ to $1$ and the other $x_{ij}, z_{ij}$ to $0$. Say $y_{22}$ occurs in $v_{r-1}$ and $y_{12}$ occurs in $v_{r-2}$. Then, there is a substitution $y_{12} \mapsto \tilde{v}_{12}(Y)$, $y_{22}\mapsto \tilde{v}_{22}(Y)$,  which, applied to \eqref{eq:decomp1} yields
\begin{multline}\label{eq:decomp2}
\sum_{j\in [2]} \bigl( x_{1j}\, y_{j1}\, z_{11} + x_{2j}\, y_{j1}\, z_{12} + x_{1j} \,\tilde{v}_{j2}(Y)\, \tilde{w}(Z)\bigr)\\ + x_{21}\, \tilde{v}_{12}(Y)\, z_{22} + \pert\, x_{22}\, \tilde{v}_{22}(Y)\, z_{22} + \frac{u_{12}}{u_{11}} (\pert-1)\, x_{12}\, y_{21}\, z_{22} \\
= \sum_{\rho=1}^{r-3} u_\rho(X) v_\rho\Bigl(\begin{bsmallmatrix}y_{11} & \tilde{v}_{12}(Y)\\ y_{21}& \tilde{v}_{22}(Y)\end{bsmallmatrix}\Bigr) w_\rho\Bigl(\begin{bsmallmatrix}z_{11}&z_{12}\\ \tilde{w}(Z) & z_{22}\end{bsmallmatrix}\Bigr).
\end{multline}
To clean up, setting $z_{22} \mapsto 0$ in \eqref{eq:decomp2} shows that
\begin{multline}\label{eq:decomp3}
\sum_{j\in [2]} x_{1j}\, y_{j1}\, z_{11} + x_{2j}\, y_{j1}\, z_{12} + x_{1j}\, \tilde{v}_{j2}(Y)\, \tilde{w}(Z)\\
= \sum_{\rho=1}^{r-3} u_\rho(X) v_\rho\Bigl(\begin{bsmallmatrix}y_{11} & \tilde{v}_{12}(Y)\\ y_{21}& \tilde{v}_{22}(Y)\end{bsmallmatrix}\Bigr) w_\rho\Bigl(\begin{bsmallmatrix}z_{11}&z_{12}\\ \tilde{w}(Z) & 0\end{bsmallmatrix}\Bigr).
\end{multline}
We can test that $x_{21}$  occurs in the obtained decomposition \eqref{eq:decomp3} by setting $y_{11}, z_{12}$ to  $1$ and the other $x_{ij}, z_{ij}$ to $0$. Similarly, we can test that $x_{22}$ occurs in the obtained decomposition \eqref{eq:decomp3} by setting $y_{21}, z_{12}$ to $1$ and the other $x_{ij}, z_{ij}$ to $0$. Say $x_{21}$ occurs in $u_{r-3}$ and $x_{22}$ occurs in $u_{r-4}$. We apply a substitution $x_{21} \mapsto \tilde{u}_{21}(X)$, $x_{22} \mapsto \tilde{u}_{22}(X)$ to see that
\begin{multline}\label{eq:decomp4}
\sum_{\smash{j\in [2]}} x_{1j}\, y_{j1}\, z_{11} + \tilde{u}_{2j}(X)\, y_{j1}\, z_{12} + x_{1j}\, \tilde{v}_{j2}(Y)\, \tilde{w}(Z)\\
= \sum_{\rho=1}^{r-5} u_\rho\Bigl(\begin{bsmallmatrix}x_{11}& x_{12} \\ \tilde{u}_{21}(X)& \tilde{u}_{22}(X)\end{bsmallmatrix}\Bigr) v_\rho\Bigl(\begin{bsmallmatrix}y_{11} & \tilde{v}_{12}(Y)\\ y_{21}& \tilde{v}_{22}(Y)\end{bsmallmatrix}\Bigr) w_\rho\Bigl(\begin{bsmallmatrix}z_{11} & z_{12}\\ \tilde{w}(Z)&0\end{bsmallmatrix}\Bigr).
\end{multline}
Apply the substitution $z_{12} \mapsto 0$ to \eqref{eq:decomp4} to get
\begin{multline}
\sum_{\smash{j\in [2]}} x_{1j}\, y_{j1}\, z_{11}\\
= \sum_{\rho=1}^{r-5} u_\rho\Bigl(\begin{bsmallmatrix}x_{11} & x_{12}\\ \tilde{u}_{21}(X) & \tilde{u}_{22}(X)\end{bsmallmatrix}\Bigr) v_\rho\Bigl(\begin{bsmallmatrix} y_{11} & \tilde{v}_{12}(Y)\\ y_{21}& \tilde{v}_{22}(Y)\end{bsmallmatrix}\Bigr) w_\rho\Bigl(\begin{bsmallmatrix}z_{11} & 0 \\ \tilde{w}(Z) & 0\end{bsmallmatrix}\Bigr).
\end{multline}
which clearly has rank $2$. Therefore, $r\geq 7$. By \cref{paramreduction} we are done.
\end{proof}

\section{Border support rank}\label{sec:bsupprank}
In this section all vector spaces are complex vector spaces. We will review a method that was introduced in \cite{landsberg2004ideals} to study equations for border rank and that was later used in \cite{hauenstein2013equations} to give a proof that $\borderrank(\langle 2,2,2\rangle) \geq 7$. Then, we will use this method to show that the border support rank of $\langle 2,2,2\rangle$ equals seven. Our Python code is included as an ancillary file with the arXiv submission.

View the space $\otimes^3 \CC^n$ as an affine variety, and let $\CC[\otimes^3\CC^n]$ be its coordinate ring. Define $\sigma_r\subseteq \otimes^3\CC^n$ as the subset of tensors with border rank at most $r$,
\[
\sigma_r \coloneqq \{s \in \otimes^3 \CC^n : \borderrank(s) \leq r\}.
\]
This is called the $r$th secant variety of the Segre variety of $\CC^n\times \CC^n \times \CC^n$. The set $\sigma_r$ is  Zariski closed in $\otimes^3 \CC^n$ by definition of border rank.  In other words, if we let $I(\sigma_r) \subseteq \CC[\otimes^3 \CC^n]$ be the ideal of polynomials on $\otimes^3 \CC^n$ that vanish identically on~$\sigma_r$, then $Z(I(\sigma_r)) = \sigma_r$. 

\subsection{Lower bounds by polynomials}
By definition, if $\borderrank(t) > r$ then there exists a polynomial in $I(\sigma_r)$ that does not vanish on~$t$. The following standard proposition says that we may in fact assume that this polynomial is homogeneous.

\begin{proposition}\label{sep}
Let $t\in \otimes^3 \CC^n$. If $\borderrank(t) > r$, then there exists a homogeneous polynomial~$f$ in~$I(\sigma_r)$ such that $f(t)\neq 0$.
\end{proposition}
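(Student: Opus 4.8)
The plan is to exploit the fact that $\sigma_r$ is not merely Zariski closed but also a \emph{cone}: it is stable under scalar multiplication, since multiplying a tensor of border rank at most $r$ by a scalar again has border rank at most $r$ (scale one of the three factors in each rank-one term). A cone that is Zariski closed has an ideal generated by homogeneous polynomials; this is the standard fact that the vanishing ideal of a cone is a homogeneous (graded) ideal. So the real content is just to translate ``there exists $f \in I(\sigma_r)$ with $f(t) \neq 0$'' into ``there exists a \emph{homogeneous} such $f$.''

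Concretely, I would argue as follows. Since $\borderrank(t) > r$, the point $t$ is not in $\sigma_r = Z(I(\sigma_r))$, so there is some $f \in I(\sigma_r)$ with $f(t) \neq 0$. Decompose $f = \sum_{d=0}^{D} f_d$ into its homogeneous components, where $f_d$ has degree $d$. I claim each $f_d$ lies in $I(\sigma_r)$. To see this, fix any $s \in \sigma_r$; then $\lambda s \in \sigma_r$ for every $\lambda \in \CC$ because $\sigma_r$ is a cone, so $0 = f(\lambda s) = \sum_{d=0}^{D} \lambda^d f_d(s)$ for all $\lambda \in \CC$. A polynomial in $\lambda$ that vanishes identically has all coefficients zero, hence $f_d(s) = 0$ for each $d$. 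Since $s \in \sigma_r$ was arbitrary, $f_d \in I(\sigma_r)$ for every $d$. Finally, because $f(t) = \sum_d f_d(t) \neq 0$, at least one homogeneous component satisfies $f_d(t) \neq 0$, and this $f_d$ is the desired homogeneous polynomial in $I(\sigma_r)$.

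There is essentially no obstacle here; the only thing to be slightly careful about is the degree-zero component. If $f_0 \neq 0$ it is a nonzero constant, and then $f_0 \in I(\sigma_r)$ would force $I(\sigma_r)$ to contain a unit, i.e. $\sigma_r = \emptyset$; but $\sigma_r \ni 0$ is nonempty, so in fact $f_0 = 0$ automatically and the nonzero component witnessing $f_d(t) \neq 0$ has $d \geq 1$. One may also phrase the whole argument as: the ideal of a cone is homogeneous, hence generated by homogeneous polynomials, and if all homogeneous generators vanished at $t$ then $f$ would too. Either phrasing gives the statement, and the conewise-decomposition argument above is the self-contained version I would write out.
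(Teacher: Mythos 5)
Your proposal is correct and follows essentially the same argument as the paper: both decompose a separating polynomial into homogeneous parts and use that $\sigma_r$ is a cone to show each part lies in $I(\sigma_r)$. The extra remark about the degree-zero component is a fine (if unnecessary) refinement.
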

\begin{proof} We give a proof for the convenience of the reader.
If $f(t)=0$ for all $f\in I(\sigma_r)$, then $t\in Z(I(\sigma_r)) = \sigma_r$, which is a contradiction. Let $f$ be a polynomial in $I(\sigma_r)$ such that $f(t)\neq0$. Let $f = \sum_d f_d$ be the decomposition of $f$ into homogeneous parts. There is a $d$ such that $f_d(t)\neq 0$.

Let $v\in \sigma_r$. For any $\alpha \in \CC$, define $g(\alpha) \coloneqq f(\alpha v)$. This is a polynomial in $\alpha$. We have $g(\alpha) = \sum_d \alpha^d f_d(v)$. Since $\sigma_r$ is closed under scaling and $f(v) = 0$, we have $g(\alpha) = 0$ for any $\alpha \in \CC$, so $g$ is the zero polynomial. Therefore, each coefficient $f_d(v)$ is 0. This argument holds for any $v\in \sigma_r$, so $f_d\in I(\sigma_r)$ for each $d$.
\end{proof}

The polynomial ring $\CC[\otimes^3\CC^n]$ decomposes into a direct sum of homogeneous parts $\CC[\otimes^3 \CC^n]_d$ and, by the above argument, the vanishing ideal $I(\sigma_r)$ decomposes accordingly as $I(\sigma_r) = \oplus_d I(\sigma_r)_d$ with~$I(\sigma_r)_d \subseteq \CC[\otimes^3 \CC^n]_d$.


The space $\otimes^3 \CC^n$ has a natural action of $G\coloneqq\GL_n^{\times 3}$ and $\sigma_r$ is a $G$-submodule. Thus $\CC[\otimes^3 \CC^n]_d \cong \Sym^d (\otimes^3 (\CC^n)^*)$ has a natural action of $G$ and $I(\sigma_r)_d$ is a $G$-submodule. We will use the well-known theory of highest-weight vectors to exploit this symmetry. The theory of highest-weight vectors holds in a much more general setting than we need here. We refer to \cite[III.1.5]{kraft1984geometrische} and \cite{hall2015lie} for the general theory, and focus on a description of the theory for the group $\GL_n^{\times 3}$.

Let $W$ be a finite-dimensional $G$-module. Choose a basis so that $G$ becomes the group of triples of invertible matrices. Let $T\subseteq G$ be the subgroup of triples of diagonal matrices. For 
\[
t = (\diag(a_1,\ldots,a_n),\,\diag(b_1,\ldots,b_n),\,\diag(c_1,\ldots,c_n)) \in T
\]
and $z = (u,v,w)\in (\ZZ^n)^3$ define $t^z \coloneqq \prod_{i=1}^n a_i^{u_i} b_i^{v_i} c_i^{w_i}$. As a $T$-module, $W$ decomposes into weight spaces,
\[
W = \!\!\bigoplus_{z\in(\ZZ^n)^3} \!\! W_z \qquad \textnormal{where}\quad W_z = \{w\in W : t\cdot w = t^z w \,\,\forall\,t\in T\}.
\]
The vectors in $W_z$ are said to have \emph{weight} $z$. Let $U\subseteq G$ be the subgroup of triples of unipotent matrices, that is, upper triangular matrices with ones on the diagonal. A nonzero vector $v \in W_z$ is a \emph{highest-weight vector} if $u\cdot v = v$ for all $u\in U$.


A finite-dimensional (rational) representation $W$ of $\GL_n^{\times 3}$ is irreducible if and only if it has a unique highest-weight vector $v$, up to multiplication by a scalar, that is, $[W]^U = \Span_\CC v$. If $W$ is irreducible and $v$ is a highest-weight vector, then one has $W = \Span_\CC(G v)$. Moreover, two irreducible representations are isomorphic if and only if their highest-weight vectors have the same weight. 
We call a sequence of $n$ nonincreasing integers a generalized partition.  It turns out that the weight of a highest-weight vector is a triple of generalized partitions. For any triple of generalized partitions $\lambda$, we will denote an abstract realisation of the $G$-module with highest-weight $\lambda$ by $V_{\lambda}$.
For any finite-dimensional $G$-module $W$, the highest-weight vectors in $W$ of weight $\lambda$ form a vector space, which we denote by~$[W_\lambda]^U$.

For a generalized partition $\lambda$, define the dual partition $\lambda^*$ as the generalized partition obtained from $\lambda$ by negating every entry and reversing the order. Then $V_{\lambda^*} = V_\lambda^*$, the dual module. We note that the polynomial irreducible representations are precisely the ones that are isomorphic to $V_\lambda$ with $\lambda$ a partition.

Recall that $\sigma_r$ is the variety of tensor in $\otimes^3 \CC^n$ of border rank at most~$r$. Consider the isotypic decomposition of $W\coloneqq \Sym^d (\otimes^3 (\CC^n)^*)$ and $I(\sigma_r)_d$ under the action of $\GL_n^{\times 3}$,
\begin{alignat*}{2}
W \,&=\, \bigoplus_{\lambda \vdash d}\, W_{\lambda^*} \,\,&=& \,\,\bigoplus_{\lambda \vdash d}\,\,k(\lambda)\,\, V_{\lambda}^*,\\
I(\sigma_r)_d \,&=\, \bigoplus_{\lambda\vdash d} I(\sigma_r)_{\lambda^*} \,\,&=& \,\,\bigoplus_{\lambda \vdash d}\,\,m(\lambda)\,\, V_{\lambda}^*,
\end{alignat*}
where $\lambda$ runs over all \emph{triples} of partitions of $d$ with at most $n$ parts, and $k(\lambda)\, V_{\lambda}^*$ denotes an isotypic component consisting of a direct sum of $k(\lambda)$ copies of the irreducible $G$-representation $V_{\lambda}^*$, similarly for $m(\lambda)\, V_{\lambda}^*$. Note that, although the direct sums run over triples of partitions $\lambda$, the representations $W$ and $I(\sigma_r)$ are not polynomial since we take duals. The number $k(\lambda)$ is exactly the dimension of the highest-weight vector space $[W_{\lambda^*}]^U$, and the number $m(\lambda)$ is the dimension of the highest-weight vector space $[I(\sigma_r)_{\lambda^*}]^U$. The following proposition extends \cref{sep} by saying that we may assume that the polynomial we are looking for is a highest-weight vector, if we replace $t$ by a random point in its $G$-orbit.

\begin{proposition}\label{guarantee}
Let $t\in \otimes^3\CC^n$. If $\borderrank(t) > r$, then there exists a highest-weight vector $f\in I(\sigma_r)$ and a group element $g\in G$ such that $f(gt) \neq 0$. 
\end{proposition}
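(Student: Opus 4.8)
The plan is to upgrade the homogeneous polynomial produced by \cref{sep} to a highest-weight vector, using the fact recalled above that an irreducible finite-dimensional $G$-module is the linear span of the $G$-orbit of any single one of its highest-weight vectors.

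First I would apply \cref{sep} to obtain a homogeneous polynomial $f_0$, of some degree $d$, lying in $I(\sigma_r)_d$ with $f_0(t)\neq 0$. Since the secant variety $\sigma_r$ is $G$-stable, $I(\sigma_r)_d$ is a $G$-submodule of the finite-dimensional module $W = \Sym^d(\otimes^3(\CC^n)^*)$, and, as recalled above, it decomposes as a direct sum of irreducible $G$-modules $I(\sigma_r)_d = \bigoplus_j M_j$. Each $M_j$ contains a highest-weight vector $v_j$ and satisfies $M_j = \Span_\CC(G v_j)$. Hence the $G$-translates of these highest-weight vectors, that is the set $\{\, g\cdot v_j : g\in G,\ j \,\}$, span all of $I(\sigma_r)_d$, so I can write $f_0 = \sum_{l} c_l\,(g_l\cdot v_{j_l})$ for finitely many scalars $c_l$ and group elements $g_l\in G$.

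Next I would simply evaluate at $t$. With the standard convention $(g\cdot f)(x) = f(g^{-1}x)$ for the induced $G$-action on the coordinate ring, this gives $0 \neq f_0(t) = \sum_l c_l\, v_{j_l}(g_l^{-1}t)$, so at least one summand is nonzero, say $v_{j_l}(g_l^{-1}t)\neq 0$. Setting $f := v_{j_l}$, which is a highest-weight vector contained in $I(\sigma_r)$, and $g := g_l^{-1}\in G$, then yields $f(gt)\neq 0$, as desired.

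I do not expect a genuine obstacle: the argument is essentially bookkeeping on top of the representation-theoretic facts already recalled. The two points that require a little care are that $I(\sigma_r)_d$ is itself $G$-stable (so that the orbit-spanning statement applies to it and not merely to the ambient $W$), which follows from $G$-stability of $\sigma_r$, and that the action on functions is twisted by an inverse, which is why the group element in the conclusion comes out as $g_l^{-1}$. One may moreover note that the set of $g\in G$ for which \emph{every} highest-weight vector of $I(\sigma_r)_d$ vanishes at $gt$ is Zariski-closed and, by the construction above, proper; this is the precise sense of the remark preceding the proposition, that a ``random'' point in the $G$-orbit of $t$ works.
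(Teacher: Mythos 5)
Your proof is correct and follows essentially the same route as the paper: apply \cref{sep} to get a homogeneous $f_0\in I(\sigma_r)_d$ with $f_0(t)\neq 0$, use that the $G$-translates of highest-weight vectors span the $G$-submodule $I(\sigma_r)_d$ to write $f_0$ as a finite combination of such translates, and conclude that one of those highest-weight vectors is nonzero at some $g_l^{-1}t$. Your extra care about the inverse in the action on functions and the Zariski-closedness of the bad set of $g$ is a welcome, correct elaboration of what the paper leaves implicit.
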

\begin{proof}
We provide the proof for the convenience of the reader.
By \cref{sep}, there exists a homogeneous polynomial $f\in I(\sigma_r)$ such that we have $f(t)\neq 0$. By highest-weight theory, the polynomial $f$ can be written as a sum $\sum_{\lambda, i} g_{\lambda, i} f_{\lambda, i}$, where $f_{\lambda, i}$ is a highest-weight vector of type $\lambda$ in $I(\sigma_r)$ and $g_{\lambda,i}\in G$. Since $f(t)\neq 0$, there exists a $\lambda$ and an~$i$ so that $f_{\lambda,i}(g_{\lambda,i}^{-1}\, t) \neq 0$.
\end{proof}

\subsection{Highest-weight vector method} The following method was first proposed in \cite{landsberg2004ideals} to study equations for border rank and was later used in \cite{hauenstein2013equations} to give a proof that $\borderrank(\langle 2,2,2\rangle) \geq 7$. Let $t\in \otimes^3 \CC^n$ be a tensor for which we want to show $\borderrank(t) > r$. 
\begin{enumerate}
\item Choose a degree $d \in \NN$. Let $W$ be the space $\Sym^d (\otimes^3 (\CC^n)^*)$. Choose a partition triple $\lambda \vdash d$ such that the highest-weight vector space $[W_{\lambda^*}]^U$ is nonzero.
\item Construct a basis $b_1, \ldots, b_k$ for $[W_{\lambda^*}]^U$.
\item Find a linear combination $f$ of the basis elements $b_1, \ldots, b_k$ that vanishes on all tensors of border rank at most $r$, that is, $f \in [I(\sigma_r)_{\lambda^{*}}]^U$ where $\sigma_r$ is the variety of tensors with border rank at most $r$.
\item Show that $f$ does not vanish on $gt$ for some $g\in G$.
\end{enumerate}
The above method is guaranteed to work by \cref{guarantee}. Before applying the method, we will consider each step in more detail. 

\textbf{Step 1. Kronecker coefficient.} The dimension of the space of $U$-invariants $[(\Sym^d( \otimes^3(\CC^n)^*))_{\lambda^*}]^U$ is the so-called \emph{Kronecker coefficient} $k(\lambda)$. We pick a partition triple $\lambda$ such that the number $k\coloneqq k(\lambda)$ is nonzero. Algorithms for computing Kronecker coefficients have been implemented in for example Schur~\cite{schur}, Sage~\cite{sage} and the Python package Kronecker~\cite{kronecker}.

\textbf{Step 2. Las Vegas construction of basis.} For any natural number $\ell\leq n$, let $\phi_\ell \coloneqq e_{1}^* \wedge \cdots \wedge e_{\ell}^*$ be the Slater determinant living in $\wedge^\ell (\CC^n)^*$. For any partition $\mu \vdash d$ with at most $n$ parts, we let $\phi_\mu$ denote the tensor $\phi_{\nu_1} \otimes \cdots \otimes \phi_{\nu_{\mu_1}}$ living in $\otimes^d (\CC^n)^*$,
where $\nu$ denotes the transpose of $\mu$. Let $\lambda=(\lambda^{(1)},\lambda^{(2)}, \lambda^{(3)})$ be a triple of partitions of $d$. We define $\phi_{\lambda} \coloneqq \phi_{\lambda^{(1)}}\otimes\phi_{\lambda^{(2)}}\otimes\phi_{\lambda^{(3)}}$. This tensor lives in $\otimes^3 \otimes^d (\CC^n)^*$, but we view it as a tensor in $\otimes^d \otimes^3 (\CC^n)^*$ via the canonical reordering. Let $P_d$ be the canonical symmetrizer $\otimes^d \otimes^3 (\CC^n)^* \to \Sym^d (\otimes^3 (\CC^n)^*)$ acting from the right. The group $S_d^{\times 3}$ has a natural right action on $\otimes^3 \otimes^d (\CC^n)^*$ and via the reordering also on $\otimes^d \otimes^3 (\CC^n)^*$.
%
Let $\lambda$ be a triple of partitions of $d$.
The tensors $\{\phi_{\lambda} \pi P_d : \pi \in S_d^{\times 3}\}$ span the vector space $[(\Sym^d (\otimes^3(\CC^n)^*))_{\lambda^*}]^U$, see \cite[4.2.17]{ikenmeyer2013geometric}.

We construct a basis of $[(\Sym^d (\otimes^3(\CC^n)^*))_{\lambda^*}]^U$ as follows. Randomly pick $k$ permutation pairs $\tau_1, \ldots, \tau_k \in S_d^{\times 2}$. Let $e\in S_d$ be the identity permutation. Let $\pi_i = (e, \tau_i^{(1)}, \tau_i^{(2)})$ and let $b_i \coloneqq \phi_\lambda \pi_i P_d$. Pick~$k$ random tensors $w_1,\ldots, w_k$ in $\otimes^3\CC^n$ and evaluate every $b_i$ in every $w_j$, giving a $k$-by-$k$ evaluation matrix~$M$. If $M$ has full rank, then $(b_1,\ldots, b_k)$ is the desired basis.

Before going to the next step we discuss how to efficiently implement the evaluation of a polynomial represented by a pair of permutations, as was already described in \cite{hauenstein2013equations}. Let $f = \phi_\lambda \pi P_d$ and let $t$ be the tensor $\sum_{i=1}^r t^1_i\otimes t^2_i \otimes t^3_i$ in $\otimes^3 \CC^n$. The evaluation of the polynomial $f$ at $t$ is equal to the contraction
\begin{align*}
\phi_\lambda \pi P_d\, t^{\otimes d} &= \phi_\lambda \pi \, t^{\otimes d}\\ &= \sum_{j \in [r]^d} \phi_\lambda \pi\, (t^1_{j_1}\otimes t^2_{j_1} \otimes t^3_{j_1}) \otimes \cdots \otimes  (t^1_{j_d}\otimes t^2_{j_d} \otimes t^3_{j_d})\\
 &= \sum_{j \in [r]^d} \phi_{\lambda^{(1)}} (t^1_{j_1} \otimes \cdots\otimes t^1_{j_d}) \\[-1em]
&\hspace{4.5em}\cdot \phi_{\lambda^{(2)}}\, \tau^{(1)} (t^2_{j_1} \otimes \cdots\otimes t^2_{j_d}) \\
&\hspace{4.5em}\cdot\phi_{\lambda^{(3)}}\, \tau^{(2)} (t^3_{j_1} \otimes \cdots\otimes t^3_{j_d}).
\end{align*}
Note that the last expression is a sum of a product of determinants. Let us study the first factor of a summand. Let $\nu$ denote the transpose of $\lambda^{(1)}$. We have
\begin{align*}
\phi_{\lambda^{(1)}} (t^1_{j_1} \otimes \cdots\otimes t^1_{j_d}) &=
(\phi_{\nu_1} \otimes \cdots \otimes \phi_{\nu_{\mu_1}}) (t^1_{j_1} \otimes \cdots\otimes t^1_{j_d})\\
&=\dett_{\nu_1}( t^1_{j_1}, \ldots, t^1_{j_{\nu_{1}}})
\dett_{\nu_2}( t^1_{j_{\nu_1}}, \ldots, t^1_{j_{\nu_{1}+\nu_{2}}})\\
&\hspace{2em}\cdots{} \dett_{\nu_{\mu_1}}( t^1_{j_{d-\nu_{\mu_1}}}, \ldots, t^1_{j_{d}}),
\end{align*}
where $\det_m(v_1,\ldots,v_m)$ denotes top $m$-by-$m$ minor of the matrix with columns $v_1, \ldots, v_m$.
Suppose that, in our evaluation of $\sum_{j}$, we have chosen values for $j_1,\ldots, j_{\nu_1}$ and suppose $\dett_{\nu_1}( t^1_{j_1}, \ldots, t^1_{j_{\nu_{1}}})$ is 0. Then whatever choices we make for $j_{\nu_1+1},\ldots, j_d$, the summand at hand will be zero. Recognizing this situation early is crucial.

\textbf{Step 3. Construction of a vector in $I(\sigma_r)$.}
Pick $k$ random tensors $t_1, \ldots, t_k$ of rank~$r$. Evaluate each basis element $b_i$ in each random tensor $t_j$. If the resulting matrix $(b_i(t_j))_{i,j\in [k]}$ has a nontrivial kernel, then we find a candidate highest-weight vector $f$ in~$I(\sigma_r)$. We can verify the correctness of the candidate by evaluating $f$ at a symbolic tensor of rank $r$. This evaluation should be zero. The way we do this symbolic evaluation is by working in $\otimes^3 \CC^6$ and using the straightening algorithm, see e.g.~the SchurFunctors package in Macaulay2~\cite{schurfunctors}. We used multi-prolongation to split up the computation in order to save memory. We refer to \cite{landsberg2004ideals, raicu2012secant} for a discussion of multi-prolongation.

\textbf{Step 4. Evaluating at $gt$.} Evaluate $f$ at $gt$ for a random $g\in G$. (In our case, it turns out that taking $g$ to be the identity is good enough.)

\subsection{The matrix multiplication tensor}
We will now prove that $\bsupprank(\langle 2,2,2 \rangle) = 7$.

\begin{proof}[\bfseries\upshape Proof of \cref{mainth}] The upper bound follows from \cref{subth}, so it remains to prove the lower bound. Let $\sigma_6$ be the variety of tensors in $\otimes^3 \CC^4$ of border rank at most~6. We will apply the method described above to the tensor $\langle 2,2,2\rangle_\pert$, see \cref{perturbed}.

 Let $d = 20$ and let $\lambda$ be the partition triple $(5,5,5,5)^{3}$. The Kronecker coefficient $k(\lambda)$ equals 4. Let $W\coloneqq \Sym^{20} (\otimes^3 (\CC^4)^*)$ and denote by $W_{\lambda^*}$ the isotypic component of type~$\lambda^*$. Writing permutations in the one-line notation, the following pairs of permutations define a basis $(b_1,b_2,b_3,b_4)$ for the highest-weight vector space $[W_{\lambda^*}]^U$:
\begin{align*}
\pi_1 = (&[5, 14, 8, 2, 12, 0, 1, 15, 6, 11, 18, 13, 4, 3, 9, 17, 7, 10, 16, 19],\\ &[14, 5, 9, 0, 6, 13, 16, 15, 4, 11, 3, 10, 12, 8, 2, 17, 7, 19, 18, 1]),\\[0.5em]
\pi_2 = (&[11, 18, 2, 12, 10, 5, 1, 17, 19, 9, 3, 4, 7, 6, 13, 0, 14, 16, 15, 8],\\ &[19, 1, 2, 7, 8, 3, 13, 6, 17, 10, 18, 12, 15, 4, 5, 11, 16, 0, 14, 9]),\\[0.5em]
\pi_3 = (&[2, 16, 17, 1, 4, 0, 7, 5, 10, 14, 11, 6, 18, 15, 9, 12, 19, 13, 3, 8],\\ &[15, 9, 0, 11, 19, 16, 18, 7, 2, 13, 5, 6, 17, 14, 8, 1, 12, 4, 10, 3]),\\[0.5em] 
\pi_4 = (&[9, 12, 14, 2, 6, 19, 18, 3, 15, 0, 1, 5, 11, 17, 7, 16, 8, 4, 13, 10],\\ &[14, 4, 18, 3, 11, 16, 15, 12, 5, 0, 17, 2, 10, 9, 13, 19, 7, 6, 1, 8]).
\end{align*}
The polynomial $f_{20} = 11832 g_1 + 233074 g_2 + 34117 g_3 - 32732 g_4$ is the only linear combination of the basis elements that is in $I(\sigma_6)$, up to scaling. We verified that $f_{20}$ is indeed in $I(\sigma_6)$ with the straightening algorithm. Evaluating $f_{20}$ on $\langle 2,2,2 \rangle_\pert$ yields
\[
f_{20}(\langle 2,2,2 \rangle_\pert) = -730140480 (\pert + 1) \pert^2.
\]

Let $d = 19$ and let $\lambda$ be the partition triple $(5,5,5,4)^3$. The Kronecker coefficient $k(\lambda)$ equals 31. Let $W\coloneqq \Sym^{19} (\otimes^3 (\CC^4)^*)$ and denote by $W_{\lambda^*}$ the isotypic component of type $\lambda^*$. The following pairs of permutations define a basis $(b_1,\ldots,b_{31})$ for the highest-weight vector space $[W_{\lambda^*}]^U$:

{\small
\begin{align*}
\pi_1 = (&[4, 8, 13, 3, 1, 12, 5, 11, 9, 15, 2, 7, 0, 17, 14, 6, 10, 18, 16], \\ &[2, 18, 5, 7, 9, 13, 0, 12, 1, 15, 10, 8, 4, 11, 16, 3, 17, 6, 14]), \displaybreak[0]\\[0.5em]
\pi_2 = (&[12, 15, 11, 7, 2, 6, 8, 17, 9, 1, 16, 13, 4, 0, 3, 10, 18, 14, 5], \\ &[11, 9, 14, 0, 15, 13, 16, 3, 6, 8, 17, 7, 10, 5, 18, 2, 12, 1, 4]), \displaybreak[0]\\[0.5em]
\pi_3 = (&[14, 1, 2, 15, 6, 3, 7, 13, 4, 18, 8, 9, 12, 10, 16, 5, 17, 0, 11], \\ &[7, 18, 2, 10, 4, 12, 0, 9, 15, 6, 5, 13, 1, 17, 14, 16, 8, 3, 11]), \displaybreak[0]\\[0.5em]
\pi_4 = (&[4, 1, 0, 12, 7, 13, 9, 16, 6, 8, 18, 15, 17, 11, 14, 2, 10, 3, 5], \\ &[5, 13, 17, 14, 3, 4, 6, 11, 8, 18, 1, 15, 2, 0, 9, 16, 7, 10, 12]), \displaybreak[0]\\[0.5em]
\pi_5 = (&[11, 14, 5, 0, 15, 8, 2, 17, 1, 13, 4, 9, 16, 6, 7, 10, 18, 3, 12], \\ &[8, 18, 4, 14, 6, 16, 10, 2, 11, 9, 5, 0, 13, 12, 1, 7, 3, 17, 15]), \displaybreak[0]\\[0.5em]
\pi_6 = (&[10, 5, 18, 8, 15, 2, 16, 1, 0, 13, 3, 4, 7, 14, 11, 6, 12, 17, 9], \\ &[0, 8, 12, 2, 3, 9, 11, 13, 5, 1, 14, 7, 4, 16, 17, 18, 15, 10, 6]), \displaybreak[0]\\[0.5em]
\pi_7 = (&[12, 1, 11, 16, 13, 7, 2, 17, 10, 15, 3, 0, 5, 4, 14, 6, 9, 8, 18], \\ &[8, 1, 4, 2, 12, 14, 18, 15, 7, 9, 0, 11, 3, 10, 6, 17, 13, 5, 16]), \displaybreak[0]\\[0.5em]
\pi_8 = (&[17, 18, 6, 11, 4, 2, 1, 9, 15, 16, 5, 8, 10, 0, 12, 13, 3, 14, 7], \\ &[14, 1, 18, 6, 10, 15, 3, 5, 11, 16, 12, 9, 13, 7, 0, 17, 8, 4, 2]), \displaybreak[0]\\[0.5em]
\pi_9 = (&[8, 2, 10, 3, 6, 4, 11, 18, 13, 0, 5, 1, 15, 17, 12, 16, 14, 7, 9], \\ &[2, 5, 13, 16, 1, 10, 3, 14, 4, 17, 18, 12, 0, 11, 9, 6, 7, 8, 15]), \displaybreak[0]\\[0.5em]
\pi_{10} = (&[13, 17, 15, 1, 12, 0, 9, 10, 6, 18, 7, 16, 14, 5, 2, 4, 11, 8, 3], \\ &[6, 12, 11, 10, 2, 14, 13, 0, 9, 15, 16, 17, 5, 8, 3, 7, 1, 18, 4]), \displaybreak[0]\\[0.5em]
\pi_{11} = (&[14, 5, 4, 1, 16, 8, 3, 7, 10, 13, 18, 6, 2, 17, 11, 9, 15, 12, 0], \\ &[5, 9, 10, 1, 2, 4, 14, 18, 8, 11, 7, 6, 15, 17, 16, 3, 0, 13, 12]), \displaybreak[0]\\[0.5em]
\pi_{12} = (&[1, 5, 4, 13, 15, 2, 17, 16, 8, 10, 11, 6, 7, 3, 12, 14, 9, 0, 18], \\ &[9, 5, 7, 8, 6, 11, 18, 3, 10, 4, 14, 17, 13, 0, 12, 15, 16, 1, 2]), \displaybreak[0]\\[0.5em]
\pi_{13} = (&[16, 13, 4, 3, 5, 2, 1, 15, 18, 6, 12, 0, 14, 8, 17, 7, 10, 11, 9], \\ &[2, 7, 8, 18, 16, 4, 6, 14, 0, 15, 9, 5, 1, 12, 10, 13, 17, 11, 3]), \displaybreak[0]\\[0.5em]
\pi_{14} = (&[5, 12, 0, 9, 3, 7, 17, 2, 6, 14, 11, 8, 15, 4, 1, 10, 13, 18, 16], \\ &[5, 15, 18, 8, 17, 11, 9, 4, 13, 1, 16, 2, 0, 14, 7, 10, 12, 3, 6]), \displaybreak[0]\\[0.5em]
\pi_{15} = (&[12, 6, 9, 14, 18, 5, 17, 2, 1, 4, 3, 11, 0, 10, 15, 7, 16, 13, 8], \\ &[9, 1, 16, 18, 14, 5, 6, 0, 10, 13, 3, 7, 15, 4, 11, 17, 12, 2, 8]), \displaybreak[0]\\[0.5em]
\pi_{16} = (&[1, 18, 4, 8, 5, 3, 0, 16, 6, 10, 11, 2, 17, 7, 9, 12, 14, 13, 15], \\ &[8, 2, 15, 12, 18, 6, 0, 11, 13, 5, 9, 4, 16, 7, 10, 17, 14, 1, 3]), \displaybreak[0]\\[0.5em]
\pi_{17} = (&[18, 8, 16, 6, 5, 7, 2, 13, 0, 4, 12, 11, 14, 15, 3, 17, 1, 10, 9], \\ &[12, 9, 14, 2, 18, 5, 0, 13, 4, 16, 8, 7, 1, 10, 6, 3, 17, 11, 15]), \displaybreak[0]\\[0.5em]
\pi_{18} = (&[7, 5, 16, 15, 1, 0, 8, 11, 14, 17, 12, 6, 9, 3, 10, 18, 13, 4, 2], \\ &[8, 9, 0, 4, 2, 3, 5, 13, 18, 12, 6, 1, 16, 11, 17, 10, 14, 7, 15]), \displaybreak[0]\\[0.5em]
\pi_{19} = (&[2, 17, 0, 14, 15, 8, 1, 9, 12, 5, 10, 3, 7, 11, 4, 16, 6, 13, 18], \\ &[13, 3, 0, 15, 7, 17, 18, 10, 6, 16, 1, 8, 9, 14, 12, 4, 5, 2, 11]), \displaybreak[0]\\[0.5em]
\pi_{20} = (&[0, 16, 9, 3, 15, 1, 4, 14, 7, 2, 18, 10, 12, 11, 17, 8, 6, 5, 13], \\ &[3, 2, 13, 11, 8, 1, 5, 4, 0, 16, 7, 17, 6, 12, 14, 9, 18, 15, 10]), \displaybreak[0]\\[0.5em]
\pi_{21} = (&[17, 3, 5, 14, 0, 16, 2, 8, 1, 11, 7, 18, 12, 6, 9, 15, 4, 13, 10], \\ &[7, 2, 17, 8, 0, 13, 6, 1, 4, 5, 18, 9, 15, 10, 16, 11, 3, 14, 12]),\displaybreak[0] \\[0.5em]
\pi_{22} = (&[5, 4, 1, 14, 16, 3, 9, 17, 12, 8, 2, 6, 11, 7, 18, 15, 13, 0, 10], \\ &[6, 14, 8, 7, 9, 18, 3, 12, 15, 2, 0, 1, 13, 5, 10, 16, 4, 11, 17]), \displaybreak[0]\\[0.5em]
\pi_{23} = (&[17, 4, 10, 13, 14, 1, 6, 8, 5, 15, 9, 2, 0, 11, 18, 7, 3, 12, 16], \\ &[6, 3, 11, 12, 15, 17, 10, 2, 8, 5, 1, 0, 14, 7, 9, 18, 13, 4, 16]), \displaybreak[0]\\[0.5em]
\pi_{24} = (&[3, 9, 0, 15, 14, 7, 1, 16, 2, 8, 11, 4, 17, 12, 10, 6, 18, 13, 5], \\ &[10, 11, 3, 2, 1, 9, 14, 13, 18, 16, 0, 4, 15, 8, 5, 12, 6, 7, 17]), \displaybreak[0]\\[0.5em]
\pi_{25} = (&[12, 2, 8, 6, 16, 1, 15, 9, 11, 14, 10, 3, 5, 17, 0, 13, 18, 4, 7], \\ &[8, 2, 14, 1, 6, 17, 16, 3, 7, 9, 11, 12, 18, 0, 5, 13, 15, 10, 4]), \displaybreak[0]\\[0.5em]
\pi_{26} = (&[2, 16, 14, 6, 9, 0, 11, 12, 3, 15, 1, 18, 17, 7, 4, 8, 13, 5, 10], \\ &[10, 15, 13, 12, 17, 0, 16, 7, 4, 11, 1, 2, 6, 14, 8, 5, 9, 3, 18]), \displaybreak[0]\\[0.5em]
\pi_{27} = (&[10, 7, 6, 0, 12, 11, 16, 13, 1, 3, 17, 14, 8, 18, 4, 2, 9, 5, 15], \\ &[3, 17, 11, 12, 6, 5, 2, 13, 18, 14, 9, 1, 7, 16, 4, 8, 10, 15, 0]), \displaybreak[0]\\[0.5em]
\pi_{28} = (&[16, 6, 8, 4, 7, 5, 9, 1, 0, 2, 14, 13, 17, 10, 18, 15, 11, 3, 12], \\ &[6, 11, 1, 12, 2, 8, 5, 9, 3, 16, 15, 18, 4, 7, 14, 0, 10, 17, 13]), \displaybreak[0]\\[0.5em]
\pi_{29} = (&[8, 13, 7, 0, 17, 4, 2, 15, 16, 1, 18, 3, 5, 11, 12, 10, 6, 14, 9], \\ &[4, 13, 1, 10, 18, 12, 2, 5, 17, 7, 6, 15, 8, 9, 0, 11, 16, 14, 3]), \displaybreak[0]\\[0.5em]
\pi_{30} = (&[1, 6, 12, 0, 3, 10, 9, 13, 17, 4, 7, 8, 18, 14, 2, 5, 15, 16, 11], \\ &[16, 6, 10, 11, 15, 8, 17, 13, 14, 4, 5, 1, 3, 12, 2, 7, 0, 18, 9]), \displaybreak[0]\\[0.5em]
\pi_{31} = (&[5, 10, 11, 8, 17, 16, 2, 15, 12, 14, 0, 18, 3, 1, 7, 9, 6, 4, 13], \\ &[10, 15, 4, 12, 18, 3, 16, 6, 0, 13, 11, 7, 1, 8, 9, 2, 14, 17, 5]).
\end{align*}
}
Let 
\begin{alignat*}{2}
c_1 &= &289082199568614200505625810989998081122378290025627334\displaybreak[0]\\
c_2 &= &41448548699164679707399349100915823812613974963005402\displaybreak[0]\\
c_3 &= &211649838021887426162677078824519293749517217920047823\displaybreak[0]\\
c_4 &= &-118150576713220917823141541211872001702845422153137763\displaybreak[0]\\
c_5 &= &-71972591371289085208000082313759547126396087856917092\displaybreak[0]\\
c_6 &= &-148042611712972282129069557835544665097810271759437007\displaybreak[0]\\
c_7 &= &-20671385701071233448917086723379921457752823704368686\displaybreak[0]\\
c_8 &= &-41700697565765737458921317121977791710351222967960389\displaybreak[0]\\
c_9 &= &89818454969459149830510070194701368406615458716738371\displaybreak[0]\\
c_{10} &= &-33389561951163547125931836395846743479037338582546746\displaybreak[0]\\
c_{11} &= &-55953034618025281839233784369005651793756337420914611\displaybreak[0]\\
c_{12} &= &99436050816695444459576518293215696786461418941439932\displaybreak[0]\\
c_{13} &= &-30608800079918651823012662681016076665421200200986429\displaybreak[0]\\
c_{14} &= &62322369796163233078186315204176712499710334162812978\displaybreak[0]\\
c_{15} &= &71531123200873494604907676681446086219352685074695096\displaybreak[0]\\
c_{16} &= &11103950876950753893392891180499777390516447716768874\displaybreak[0]\\
c_{17} &= &-18170416924354926777786745151805158474424942420073625\displaybreak[0]\\
c_{18} &= &56636600557844043196391811853778001287738236566321291\displaybreak[0]\\
c_{19} &= &-49475697236538461568207568070821224602714314684182556\displaybreak[0]\\
c_{20} &= &-58897567946922439319826816178640661508235201647724834\displaybreak[0]\\
c_{21} &= &-29789369352552042959878217935401203848547004115080562\displaybreak[0]\\
c_{22} &= &42553086095082787553533988614363448520647296308373860\displaybreak[0]\\
c_{23} &= &-10584947869810207513601472123471095674362492708851758\displaybreak[0]\\
c_{24} &= &-155536179226293398590182659612811187764949236460651258\displaybreak[0]\\
c_{25} &= &-15163630056597008306009257387099740416829146255166469\displaybreak[0]\\
c_{26} &= &152468055855066906135282920200590542819196123610118125\displaybreak[0]\\
c_{27} &= &-170101205621738870358375711649013594303036219144235962\displaybreak[0]\\
c_{28} &= &-36619800006361115328892590783407206736313224654320560\displaybreak[0]\\
c_{29} &= &63636824324804825079032794300460871506246849887804488\displaybreak[0]\\
c_{30} &= &-114422655018015193150391631424350000645293977961135740\displaybreak[0]\\
c_{31} &= &99270978701207213884119395668714341424298017907910144
\end{alignat*}
and define $f_{19} = c_1 g_1 +  \cdots + c_{31}g_{31}$. This is the only linear combination that is in $I(\sigma_6)$, up to scaling. We verified that $f_{19}$ is in $I(\sigma_6)$ by straightening. Evaluating $f_{19}$ at $\langle 2,2,2 \rangle_\pert$ yields
\[
69332245782016022615247261570208505413020193878724712262 (3 \pert + 2) \pert.
\]
We have thus found two highest-weight vectors
\begin{align*}
f_{19} &\in [I(\sigma_6)_{(5,5,5,4)^{3\,*}}]^U \subseteq \Sym^{19} (\otimes^3 (\CC^4)^*)\\
f_{20} &\in [I(\sigma_6)_{(5,5,5,5)^{3\,*}}]^U \subseteq \Sym^{20} (\otimes^3 (\CC^4)^*)
\end{align*}
such that $f_{19}(\langle2,2,2\rangle_\pert) = \alpha\, \pert(3\pert+2)$ and $f_{20}(\langle2,2,2\rangle_\pert) = \beta\, \pert^2(\pert+1)$, where $\alpha$ and~$\beta$ are nonzero constants. The only simultaneous root of these polynomials occurs at $\pert=0$. This means that for any nonzero $\pert$, the point $\langle 2,2,2 \rangle_\pert$ is not contained in $\sigma_6$. From \cref{paramreduction} we conclude that the border rank of any tensor with the same support as $\langle 2,2,2 \rangle$ is at least seven, which proves the theorem.
\end{proof}

\begin{remark}The lower bound $\borderrank(\langle 2,2,2\rangle) \geq 7$ in \cite{hauenstein2013equations} was also obtained by showing that the highest-weight vector space $[I(\sigma_6)_{(5,5,5,5)^{3\,*}}]^U$ is nonzero, and the evaluation of a nonzero element $v\in [I(\sigma_6)_{(5,5,5,5)^{3\,*}}]^U$ at $\langle 2,2,2\rangle$ is nonzero.
\end{remark}

%

\paragraph{Acknowledgements.} The authors are grateful to Christian Ikenmeyer for helpful discussions. 
MC acknowledges financial support from the European Research Council (ERC Grant Agreement no.~337603), the Danish Council for Independent Research (Sapere Aude), and VILLUM FONDEN via the QMATH Centre of Excellence (Grant no.~10059).
JZ is supported by NWO through the research programme 617.023.116. The computations in this work were carried out on the Dutch national e-infrastructure with the support of SURF Cooperative.

\raggedright

\bibliographystyle{alphaurlpp}
\bibliography{all}

\vspace{1em}
\textbf{Matthias Christandl}\\
QMATH, Department of Mathematical Sciences, University of Copenhagen, Universitetsparken 5, 2100 Copenhagen, Denmark.\\
Email: \href{mailto:christandl@math.ku.dk}{christandl@math.ku.dk}\\[1em]
\textbf{Markus Bläser}\\
Computer Science, Saarland University, Saarland Informatics Campus E1.3, 66123 Saarbrücken, Germany.\\
Email: \href{mailto:mblaeser@cs.uni-saarland.de}{mblaeser@cs.uni-saarland.de}\\[1em]
\textbf{Jeroen Zuiddam}\\
QuSoft, CWI Amsterdam and University of Amsterdam, Science Park~123, 1098~XG Amsterdam, Netherlands. \\
Email: \href{mailto:j.zuiddam@cwi.nl}{j.zuiddam@cwi.nl}

\end{document}